\newtheorem{theorem}{Theorem}[section]
  \theoremstyle{definition}
  \theoremstyle{remark}
\numberwithin{equation}{section}
\newcommand{ \binomial }{ \operatorname{Binomial} }
\newcommand{ \mymod }[1]{ \;(\operatorname{mod}\, #1) }
\renewcommand{\qed}{\hfill $ \blacksquare $}
\begin{document}

\title[Maximum Entropy of a Sum of Discrete Random Variables]
      {On the Maximum Entropy of a Sum of Independent Discrete Random Variables}

\author{Mladen~Kova\v{c}evi\'{c}}


\thanks{The author is with the University of Novi Sad, Serbia. Email: kmladen@uns.ac.rs.}

\thanks{This work was supported by the European Union's Horizon 2020
research and innovation programme under Grant Agreement number 856967,
and by the Ministry of Education, Science and Technological Development
of the Republic of Serbia through the project number 451-03-68/2020-14/200156.}

\subjclass[2020]{Primary: 94A17. Secondary: 60C05, 60G50.}

\date{February 4, 2021.}

\keywords{Maximum entropy, Bernoulli sum, binomial distribution,
Shepp--Olkin theorem, ultra-log-concavity.}

\begin{abstract}
Let $ X_1, \ldots, X_n $ be independent random variables taking values
in the alphabet $ \{0, 1, \ldots, r\} $, and $ S_n = \sum_{i = 1}^n X_i $.
The Shepp--Olkin theorem states that, in the binary case ($ r = 1 $),
the Shannon entropy of $ S_n $ is maximized when all the $ X_i $'s are
uniformly distributed, i.e., Bernoulli(1/2).
In an attempt to generalize this theorem to arbitrary finite alphabets,
we obtain a lower bound on the maximum entropy of $ S_n $ and prove
that it is tight in several special cases.
In addition to these special cases, an argument is presented supporting
the conjecture that the bound represents the optimal value for all $ n, r $,
i.e., that $ H(S_n) $ is maximized when $ X_1, \ldots, X_{n-1} $ are
uniformly distributed over $ \{0, r\} $, while the probability mass
function of $ X_n $ is a mixture (with explicitly defined non-zero weights)
of the uniform distributions over $ \{0, r\} $ and $ \{1, \ldots, r-1\} $.
\end{abstract}

\maketitle

\section{Introduction}
\label{sec:intro}

Maximum entropy probability distributions, being of interest in various
fields of science and engineering \cite{kapur}, have been studied extensively
in the literature.
Many of the canonical distributions from probability theory (e.g., uniform,
geometric, exponential, Gaussian) can be characterized as entropy maximizers
in natural families of probability laws.
Some more recent works \cite{harremoes, johnson, yu} have shown that Poisson
and binomial distributions are also maximum entropy distributions under
certain log-concavity constraints.
In this paper we consider the problem of entropy maximization for sums of
independent random variables, which has itself attracted a lot of interest
and is of importance in information theory in particular.
One of the most basic and well-known results in this area is the Shepp--Olkin
theorem \cite{mateev, shepp+olkin} which states that the entropy of a sum of
independent binary random variables is maximized when all the variables are
uniform, i.e., Bernoulli($ 1/2 $).
This statement has subsequently been strengthened in several respects (see,
e.g., the recent works \cite{hillion+johnson, hillion+johnson2, hillion+johnson3}
which settled the conjectures made in \cite{shepp+olkin}), and a continuous
version of the problem was analyzed in \cite{ordentlich, yu2}.
Virtually nothing is known about the problem for discrete variables over
non-binary alphabets.
In an attempt to generalize the Shepp--Olkin theorem to arbitrary finite
alphabets, we obtain a lower bound on the maximum entropy of a sum of
independent discrete random variables and prove that the bound is tight,
i.e., that it is in fact equal to the optimal value, in some particular
cases.

\subsection*{Notation, definitions, and auxiliary facts}

The Shannon entropy of a discrete random variable $ X $ with probability
mass function $ P_X $ supported on $ \{0, 1, \ldots, r\} $ is defined as
$ H(X) = H(P_X) = - \sum_{j=0}^r P_X(j) \log_2 P_X(j) $.
The following bounds on entropy are immediate from the definition:
$ 0 \leqslant H(X) \leqslant \log_2(r+1) $.
We shall also write $ h(p) = H(p, 1-p) $ for the binary entropy function.
The following elementary property of entropy will be useful in the analysis:
given a partition $ \{ A_1, \ldots, A_M \} $ of the alphabet (meaning that the
$ A_i $'s are pairwise disjoint and their union is $ \{0, 1, \ldots, r\} $),
we have
\begin{align}
\label{eq:decomposition}
  H(X)  =  \sum_{m=1}^{M}  \alpha_m H( X \,|\, X \in A_m )  +  H(\alpha_1, \ldots, \alpha_M) ,
\end{align}
where $ \alpha_m = P(X \in A_m) = \sum_{j \in A_m} P_X(j) $, and
$ H\!\left( X \,|\, X \in A_m \right) = H\!\left( P_{X | X \in A_m} \right) $
is the entropy of the conditional distribution $ P_{X | X \in A_m} $.

The binomial distribution with parameters $ n, p $ is denoted by
$ \binomial(n, p) $.
We shall use the symbol $ B_n $ for a generic random variable with
$ \binomial(n, 1/2) $ distribution.

A probability distribution (or any non-negative sequence) $ u_0, u_1, \ldots, u_n $
is said to be log-concave if $ u^2_{i} \geqslant u_{i-1} u_{i+1} $ for
all $ i = 1, \ldots, n-1 $.
It is said to be ultra-log-concave of order $ \infty $ if the stronger
condition $ i u^2_{i} \geqslant (i+1) u_{i-1} u_{i+1} $ holds, i.e., if
the sequence $ (u_i i!)_{i=0}^{n} $ is log-concave, and it is said to be
ultra-log-concave of order $ n $ if the still stronger condition
$ i (n-i) u^2_{i} \geqslant (i+1)(n-i+1) u_{i-1} u_{i+1} $ holds, i.e.,
if the sequence $ \left(u_i / \binom{n}{i} \right)_{i=0}^{n} $ is log-concave.

\section{The results}
\label{sec:results}

The following theorem presents a lower bound on the maximum value of
$ H(X_1 + \cdots + X_n) $ and claims that the bound is tight in the
case $ n = 2 $.

\begin{theorem}
\label{thm:main}
Let $ X_1, \ldots, X_n $ be independent random variables taking values
in $ \{0, 1, \ldots, r\} $, and let $ S_n = \sum_{i = 1}^n X_i $.
Then
\begin{align}
\label{eq:max}
  \max_{P_{X_1}, \ldots, P_{X_n}} H(S_n)  \;\geqslant\;
	  w_0 H(B_n) + (1 - w_0) \big( H(B_{n-1}) + \log_2(r-1) \big) + h(w_0)  ,
\end{align}
where
\begin{align}
\label{eq:w0}
  w_0 = \frac{ 2^{H(\!B_n\!) - H(\!B_{n-1}\!)} }{ r-1 + 2^{H(\!B_n\!) - H(\!B_{n-1}\!)} } .
\end{align}

For $ n = 2 $, equality holds in \eqref{eq:max} for all $ r \geqslant 1 $.
That is,
\begin{align}
\label{eq:maxn2}
  \max_{P_{X_1}, P_{X_2}} H(S_2) = 1 + \frac{w_0}{2} + (1-w_0) \log_2(r-1) + h(w_0) ,
\end{align}
where $ w_0 = \frac{ \sqrt{2} }{ r - 1 + \sqrt{2} } $.
\end{theorem}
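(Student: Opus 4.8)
The inequality $\geqslant$ in \eqref{eq:maxn2} is just the $n=2$ instance of the general bound \eqref{eq:max}, exhibited by the explicit construction ($X_1$ uniform on $\{0,r\}$ and $X_2$ the stated mixture); applying the decomposition \eqref{eq:decomposition} to the sets $\{0,r,2r\}$ and its complement shows this construction attains the right-hand side. The substance of the statement is therefore the matching upper bound, and the plan is to prove $H(S_2)\leqslant 1+\tfrac{w_0}{2}+(1-w_0)\log_2(r-1)+h(w_0)$ for every pair $P_{X_1},P_{X_2}$.

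First I would partition the range $\{0,1,\dots,2r\}$ of $S_2$ into $M=\{0,r,2r\}$ and $N=\{1,\dots,2r\}\setminus M$, and apply \eqref{eq:decomposition} to write $H(S_2)=\beta\,H(S_2\mid S_2\in M)+(1-\beta)\,H(S_2\mid S_2\in N)+h(\beta)$, where $\beta=P(S_2\in M)$. The term on $N$ is controlled trivially, $H(S_2\mid S_2\in N)\leqslant\log_2|N|=1+\log_2(r-1)$. The crucial point is to show $H(S_2\mid S_2\in M)\leqslant H(B_2)=\tfrac32$, since the naive bound $\log_2 3$ is strictly too weak and would not close the argument.

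To control the three-point conditional law I would use the convolution structure. Writing $p_j=P(X_1=j)$, $q_j=P(X_2=j)$ and $s_k=P(S_2=k)$, one has $s_0=p_0q_0$, $s_{2r}=p_rq_r$, and $s_r\geqslant p_0q_r+p_rq_0\geqslant 2\sqrt{p_0q_0\,p_rq_r}=2\sqrt{s_0 s_{2r}}$ by AM--GM. Hence the normalized law $(s_0,s_r,s_{2r})/\beta=(a,b,c)$ on $M$ satisfies the log-concavity-type constraint $b\geqslant 2\sqrt{ac}$. I would then prove the elementary sub-lemma that any distribution $(a,b,c)$ with $b\geqslant 2\sqrt{ac}$ has entropy at most $\tfrac32$: fixing the middle mass $b$, the quantity $-a\log_2 a-c\log_2 c$ is maximized at $a=c$, which the constraint allows only when $b\geqslant\tfrac12$ (giving $H=h(b)+(1-b)$, decreasing on $[\tfrac12,1]$, hence $\leqslant\tfrac32$); otherwise the optimum lies on the curve $b=2\sqrt{ac}$, where $(a,b,c)$ traverses exactly the $\binomial(2,u)$ family and $H\leqslant\tfrac32$ by the binary ($r=1$) Shepp--Olkin theorem for $n=2$.

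Combining the three estimates gives $H(S_2)\leqslant g(\beta):=\tfrac32\beta+(1-\beta)\bigl(1+\log_2(r-1)\bigr)+h(\beta)$, which is concave in $\beta$; setting $g'(\beta)=0$ yields $\frac{1-\beta}{\beta}=\frac{r-1}{\sqrt2}$, i.e.\ $\beta=\frac{\sqrt2}{r-1+\sqrt2}=w_0$, and a short simplification shows $g(w_0)$ equals the right-hand side of \eqref{eq:maxn2}. Since all three inequalities are simultaneously tight for the construction above ($B_2$ on $M$, uniform on $N$, and $\beta=w_0$), equality follows. I expect the main obstacle to be exactly the bound $H(S_2\mid S_2\in M)\leqslant\tfrac32$: the insight that the convolution forces $s_r\geqslant 2\sqrt{s_0s_{2r}}$, together with the reduction of the resulting three-point entropy problem to the binary $n=2$ case, is what makes the partition bound tight rather than merely giving $\log_2 3$. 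The degenerate case $r=1$ (where $N=\varnothing$ and $\beta=w_0=1$) is covered by the convention $0\cdot\log_2 0=0$.
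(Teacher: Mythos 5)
Your proposal is correct, and its overall architecture coincides with the paper's: the same explicit construction for the lower bound, and for the converse the same partition of the range of $S_2$ by residue mod $r$ (your two-block split into $M$ and $N$ with the crude bound $\log_2|N|$ gives exactly the same expression as the paper's $r$-block split followed by merging the nonzero classes), together with the same key inequality $P_{S_2}^2(r)\geqslant 4P_{S_2}(0)P_{S_2}(2r)$ — your AM--GM step is literally the paper's identity $(p_0q_r+p_rq_0)^2-4p_0q_0p_rq_r=(p_0q_r-p_rq_0)^2$. The one genuine difference is how you get from that inequality to $H(S_2\,|\,S_2\in M)\leqslant\tfrac32$: the paper observes that the inequality says the conditional law is ultra-log-concave of order $2$ and invokes Yu's theorem that $\binomial(n,1/2)$ maximizes entropy among order-$n$ ultra-log-concave laws, whereas you prove the needed three-point special case from scratch (for fixed middle mass the optimum is either at $a=c$, forcing $b\geqslant\tfrac12$, or on the boundary curve $b=2\sqrt{ac}$, which is precisely the $\binomial(2,u)$ family). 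Your sub-lemma is sound — the feasible set in $a$ for fixed $b<\tfrac12$ is two intervals whose inner endpoints lie on the curve, and the symmetric concave objective is maximized there — and it buys a self-contained, citation-free proof of the $n=2$ case; the paper's route via \cite{yu} is what generalizes to the $n=3$, $r=2$ case and to Theorem 2.3, so the citation is not idle in the broader context. One cosmetic point: for the general-$n$ lower bound you should still record, as the paper does, that the construction makes $P_{S_n}$ a disjoint mixture of one $\binomial(n,1/2)$ and $r-1$ copies of $\binomial(n-1,1/2)$, so that its entropy is exactly the right-hand side of \eqref{eq:max}.
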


Note that equality in \eqref{eq:max} holds also in the following cases:
\begin{itemize}[leftmargin=0.5cm]
\item
$ n = 1 $, $ r \geqslant 1 $.
In this case $ w_0 = \frac{2}{r+1} $ and the right-hand side of \eqref{eq:max}
reduces to $ \log_2(r+1) $ (it is understood that $ H(B_0) = 0 $).
It is well-known that $ \max_{P_{X}} H(X) = \log_2(r+1) $, and that the maximum
is attained when $ P_X $ is uniform.
\item
$ n \geqslant 1 $, $ r = 1 $.
In this case $ w_0 = 1 $ and the right-hand side of \eqref{eq:max}
reduces to $ H(B_n) $ (it is understood that $ (1 - w_0) \log_2(r - 1) = 0 $).
When $ r = 1 $, it is known that $ \max_{P_{X_1}, \ldots, P_{X_n}} H(S_n) = H(B_n) $,
and that the maximum is attained when the $ P_{X_i} $'s are all uniform.
This is precisely the statement of the Shepp--Olkin theorem.
\end{itemize}

\begin{proof}[Proof of Theorem \ref{thm:main}]
Select the following distributions for the random variables $ X_1, \ldots, X_n $:
$ Q_{X_i}(0) = Q_{X_i}(r) = \frac{1}{2} $ for $ i = 1, \ldots, n-1 $,
and $ Q_{X_n}(0) = Q_{X_n}(r) = \frac{w_0}{2} $,
$ Q_{X_n}(j) = \frac{1-w_0}{r-1} $ for $ j = 1, \ldots, r-1 $,
for some parameter $ w_0 $.
Let $ Q_{S_n} $ denote the corresponding distribution of $ S_n $.
With this choice of probability mass functions,
$ S_{n-1} = \sum_{i = 1}^{n-1} X_i $ is a Bernoulli sum having a
binomial distribution with parameters $ n-1 $ and $ 1/2 $
(and alphabet $ \{ k r : k = 0, 1, \ldots, n - 1 \} $), namely
\begin{align}
\label{eq:binom}
  Q_{S_{n-1}}(k r) = \binom{n-1}{k} 2^{-(n-1)} , \quad  k = 0, 1, \ldots, n-1 .
\end{align}
Consequently, for $ k = 0, 1, \ldots, n $ we have
\begin{align}
\label{eq:Sn}
	Q_{S_{n}}(kr)  =  Q_{S_{n-1}}(kr)  Q_{X_n}(0) + Q_{S_{n-1}}((k-1)r)  Q_{X_n}(r)
                 =  w_0 \binom{n}{k} 2^{-n}  ,
\end{align}
and, for $ k = 0, 1, \ldots, n -1 $ and $ j = 1, \ldots, r-1 $,
\begin{align}
\label{eq:Snj}
  Q_{S_{n}}(kr + j)  =  Q_{S_{n-1}}(kr)  Q_{X_n}(j)  
	                   =  \frac{1-w_0}{r-1} \binom{n-1}{k} 2^{-(n-1)}  .
\end{align}
Therefore, conditioned on the event $ S_n \equiv 0 \mymod{r} $, $ S_n \sim \binomial(n, 1/2) $,
and conditioned on $ S_n \equiv j \mymod{r} $, $ S_n \sim \binomial(n - 1, 1/2) $,
for every $ j = 1, \ldots, r-1 $.
In other words, the distribution of the random variable $ S_n $ is a
disjoint mixture of a $ \binomial(n, 1/2) $ distribution of weight $ w_0 $,
and $ r - 1 $ $ \binomial(n-1, 1/2) $ distributions of weight $ \frac{1-w_0}{r-1} $
each (see Figure \ref{fig:distribution}).
The entropy of this distribution is given precisely by the expression on
the right-hand side of \eqref{eq:max}.
The weight $ w_0 $ that maximizes this expression is the one in \eqref{eq:w0},
which can be shown directly by differentiating \eqref{eq:max}.

\begin{figure}
\centering
  \includegraphics[width=0.95\columnwidth]{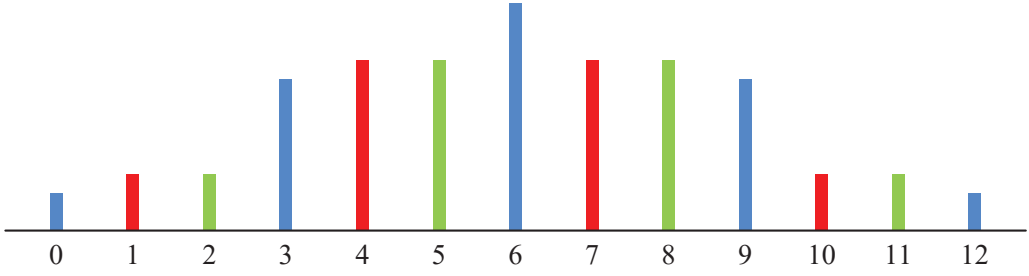}
\caption{The probability distribution of $ S_4 $ in the case when
$ P_{X_1} $, $ P_{X_2} $, and $ P_{X_3} $ are uniform over $ \{0, 3\} $,
and $ P_{X_4} $ is a mixture of the uniform distributions over $ \{0, 3\} $
and $ \{1, 2\} $.
For the purpose of illustration, the values $ P_{S_4}(a) $ are marked
in different colors for {\color{blue} $ {a \equiv 0 \mymod{3} } $},
{\color{red} $ a \equiv 1 \mymod{3} $}, and
{\color{LimeGreen} $ {a \equiv 2 \mymod{3} } $}.}
\label{fig:distribution}
\end{figure}%

To prove the second part of the statement, let $ n = 2 $, and let
$ P_{X_1}, P_{X_2} $ be generic distributions of $ X_1, X_2 $, and
$ P_{S_2} $ the induced distribution of $ S_2 $.
To simplify the notation, denote
$ w_0 = P(S_2 \equiv 0 \mymod{r}) = P_{S_{2}}(0) + P_{S_{2}}(r) + P_{S_{2}}(2r) $,
and $ w_j = P(S_2 \equiv j \mymod{r}) = P_{S_{2}}(j) + P_{S_{2}}(r+j) $
for $ j = 1, \ldots, r-1 $.
According to \eqref{eq:decomposition}, the entropy of $ S_2 $ can be decomposed
as follows:
\begin{align}
\label{eq:H2}
  H(S_2)  =  \sum_{j=0}^{r-1}  w_j H(S_2 \,|\, S_2 \equiv j \mymod{r})  +  H(w_0, w_1, \ldots, w_{r-1})  .
\end{align}
We will show that the choice of the probability mass functions from
the first part of the proof ($ Q_{X_1}, Q_{X_2} $) simultaneously maximizes
all the terms on the right-hand side of \eqref{eq:H2}, thereby maximizing
$ H(S_2) $ as well.
First, for $ j = 1, \ldots, r-1 $, it is clear that
$ H(S_2 \,|\, S_2 \equiv j \mymod{r}) \leqslant H(B_1) = 1 $, because the
conditional distribution $ P_{S_2 | S_2 \equiv j \mymod{r}} $ has only two
masses.
Now consider the case $ j = 0 $.
We have
\begin{align}
\nonumber
  P_{S_{2}}(0)   &=  P_{X_1}(0) P_{X_2}(0) ,  \\
\label{eq:S2}
	P_{S_{2}}(r)   &=  P_{X_1}(0) P_{X_2}(r) + P_{X_1}(r) P_{X_2}(0) + \sum_{j=1}^{r-1} P_{X_1}(j) P_{X_2}(r-j) ,  \\
\nonumber
	P_{S_{2}}(2r)  &=  P_{X_1}(r) P_{X_2}(r) .
\end{align}
It is straightforward to verify from these equations that the following holds:
\begin{align}
  P^2_{S_{2}}(r) - 4 P_{S_{2}}(0) P_{S_{2}}(2r)
	\geqslant  \big( P_{X_1}(0) P_{X_2}(r) - P_{X_1}(r) P_{X_2}(0) \big)^2
	\geqslant  0 ,
\end{align}
implying that the conditional distribution $ P_{S_2 | S_2 \equiv 0 \mymod{r}} $
is ultra-log-concave of order $ n = 2 $.
It was shown in \cite{yu} that the $ \binomial(n, 1/2) $ distribution is the
entropy maximizer in the class of all ultra-log-concave distributions of order $ n $,
and therefore $ H(S_2 \,|\, S_2 \equiv 0 \mymod{r}) \leqslant H(B_2) = \frac{3}{2} $.
From these observations and \eqref{eq:H2} we conclude that
\begin{align}
\label{eq:H22}
  H(S_2)  &\leqslant  w_0 H(B_2) + (1 - w_0) H(B_1)  +  H(w_0, w_1, \ldots, w_{r-1})  \\
\nonumber
          &\leqslant  w_0 H(B_2) + (1 - w_0) H(B_1)  +  H(w_0, 1-w_0) + (1-w_0) \log_2(r-1)  \\
\nonumber
					&=  w_0 H(B_2) + (1 - w_0) \big( H(B_{1}) + \log_2(r-1) \big) + h(w_0) ,
\end{align}
where the second inequality follows by partitioning $ \{0, 1, \ldots, r-1\} $
into $ \{0\} $ and $ \{1, \ldots, r-1\} $ and applying \eqref{eq:decomposition}
to the term $ H(w_0, w_1, \ldots, w_{r-1}) $.
Now \eqref{eq:max} and \eqref{eq:H22} imply \eqref{eq:maxn2}.
\end{proof}

Although we are at present in no position to prove such a statement,
it is tempting to conjecture that equality holds in \eqref{eq:max} for
all $ n, r $, i.e., that $ H(S_n) $ is maximized when $ X_1, \ldots, X_{n-1} $
are uniformly distributed over $ \{0, r\} $, while the probability
mass function of $ X_n $ is a mixture of the uniform distributions
over $ \{0, r\} $ and $ \{1, \ldots, r-1\} $.
In addition to the case $ n = 2 $, and several more special cases to
follow, where this is shown to be true, this claim would be in agreement
with the continuous version of the problem where it is known
\cite{ordentlich} that the differential entropy of a sum of independent
\emph{symmetric} random variables taking values in the interval $ [-1, +1] $
is maximized when $ X_1, \ldots, X_{n-1} $ are uniformly distributed,
i.e., Bernoulli($1/2$), on $ \{-1, +1\} $ and $ X_n $ is uniformly
distributed on the entire interval $ [-1, +1] $ (this is also conjectured
to be true without the symmetry assumption).

Before proceeding to the remaining special cases that we intend to
analyze, we reiterate once more the main idea behind Theorem \ref{thm:main}.
Our approach is to decompose the distribution $ P_{S_n} $ into $ r $
conditional distributions $ P_{S_n | S_n \equiv j \mymod{r}} $ of weight
$ w_j = P(S_n \equiv j \mymod{r}) $, $ j = 0, 1, \ldots, r-1 $, write
\begin{align}
\label{eq:max2}
  H(S_n)  =  \sum_{j=0}^{r-1}  w_j H(S_n \,|\, S_n \equiv j \mymod{r})  +  H(w_0, w_1, \ldots, w_{r-1})  ,
\end{align}
and then optimize the entropies and weights of each of these conditional
distributions.
These distributions are of course interdependent and it is not obvious
that they can be optimized separately.
However, guided by intuition, as well as by the problem's continuous
counterpart \cite{ordentlich}, one may ``guess'' that a (near) optimal
solution is obtained when the probability mass functions
$ P_{X_1}, \ldots, P_{X_{n-1}} $ are uniform on $ \{0, r\} $.
In this case the random variable $ S_n $, conditioned on the event
$ S_n \equiv j \mymod{r} $, has binomial distribution so we have
$ H(S_n \,|\, S_n \equiv 0 \mymod{r}) = H(B_n) $, and
$ H(S_n \,|\, S_n \equiv j \mymod{r}) = H(B_{n-1}) $ for $ j = 1, \ldots, r-1 $,
and the expression \eqref{eq:max2} reduces to
\begin{align}
\label{eq:max3}
  H(S_n)  =  w_0 H(B_n) + (1 - w_0) H(B_{n-1}) + H(w_0, w_1, \ldots, w_{r-1}) .
\end{align}
Moreover, in this case the weights of these conditional distributions
($ w_j $, $ j = 0, 1, \ldots, r-1 $) are controlled by the masses of the
$ n $'th random variable, $ X_n $.
Namely, $ w_0 = P(S_n \equiv 0 \mymod{r}) = P(X_n = 0) + P(X_n = r) $, and
$ w_j = P(S_n \equiv j \mymod{r}) = P(X_n = j) $ for $ j = 1, \ldots, r-1 $.
These weights can therefore be chosen separately in order to maximize the
expression in \eqref{eq:max3} and thus obtain a good lower bound, stated
in \eqref{eq:max}, on the maximum entropy of $ S_n $.
Further, by using the maximum entropy properties of the binomial distribution,
one may prove that the above choice of probability mass functions $ P_{X_i} $
is in fact optimal in some cases.
In particular, by the results of \cite{yu}, showing that the conditional
distributions $ P_{S_n | S_n \equiv j \mymod{r}} $ are ultra-log-concave
of order $ t $ (where $ t = n $ for $ j = 0 $ and $ t = n - 1 $ for
$ j = 1, \ldots, r-1 $) is sufficient to conclude that
$ H(S_n \,|\, S_n \equiv j \mymod{r}) \leqslant H(B_t) $ and that,
consequently, equality holds in \eqref{eq:max}.
This reasoning is used to establish the following claim as well.

\begin{theorem}
\label{thm:n3r2}
For $ n = 3 $, $ r = 2 $, equality holds in \eqref{eq:max}.
That is, over a ternary alphabet,
\begin{align}
\label{eq:n3r2}
  \max_{P_{X_1}, P_{X_2}, P_{X_3}} H(S_3)
	 =  \frac{3}{4} \big( 2 + (2 - \log_2 3) w_0 \big)  + h(w_0) ,
\end{align}
where $ w_0 = \frac{ (4/3)^{3/4} }{ 1 + (4/3)^{3/4} } $.
\end{theorem}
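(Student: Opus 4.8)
The plan is to follow the template of the $ n = 2 $ case of Theorem~\ref{thm:main}. Since the construction given there (with $ P_{X_1}, P_{X_2}, P_{X_3} $ uniform over $ \{0,2\} $ for the first two and $ P_{X_3} $ a suitable mixture) already attains the right-hand side of \eqref{eq:max}, it suffices to prove the matching \emph{upper} bound. Partitioning the alphabet $ \{0, 1, \ldots, 6\} $ of $ S_3 $ by parity and applying \eqref{eq:decomposition}, exactly as in \eqref{eq:max2}, gives
\[
  H(S_3) = w_0\, H(S_3 \,|\, S_3 \equiv 0 \mymod{2}) + (1-w_0)\, H(S_3 \,|\, S_3 \equiv 1 \mymod{2}) + h(w_0),
\]
where $ w_0 = P(S_3 \equiv 0 \mymod{2}) $. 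I would then reduce the theorem to two claims: the conditional law on the even residue class is ultra-log-concave of order $ 3 $, and the one on the odd class is ultra-log-concave of order $ 2 $. Granting these, the result of \cite{yu} bounds the two conditional entropies by $ H(B_3) $ and $ H(B_2) $ respectively, so that $ H(S_3) \leqslant w_0 H(B_3) + (1-w_0) H(B_2) + h(w_0) $. Maximizing this concave function of $ w_0 \in [0,1] $ (the one-variable computation already used in Theorem~\ref{thm:main}) is attained at $ w_0 = 2^{H(B_3)-H(B_2)}/\big(1 + 2^{H(B_3)-H(B_2)}\big) = (4/3)^{3/4}/\big(1+(4/3)^{3/4}\big) $, reproducing \eqref{eq:max}--\eqref{eq:w0} with $ r = 2 $ and closing the gap.

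The core of the argument, and where I would invest the effort, is the ultra-log-concavity of the two conditional laws. Introduce the probability generating function $ F(x) = \prod_{i=1}^3 f_i(x) $ with $ f_i(x) = P_{X_i}(0) + P_{X_i}(1) x + P_{X_i}(2) x^2 $, and split it into even and odd parts, $ F(x) = p(x^2) + x\,q(x^2) $. The unnormalized masses of the even conditional are the coefficients of the cubic $ p $, and those of the odd conditional are the coefficients of the quadratic $ q $; reindexing the supports $ \{0,2,4,6\} $ and $ \{1,3,5\} $ to $ \{0,1,2,3\} $ and $ \{0,1,2\} $ turns the two ultra-log-concavity claims into Newton's inequalities for $ p $ (of order $ 3 $) and $ q $ (of order $ 2 $). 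By Newton's inequalities it therefore suffices to show that $ p $ and $ q $ have only real roots.

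This real-rootedness is the main obstacle, and I would establish it by an argument-variation (winding) argument on the imaginary axis. Writing $ x = is $ with $ s > 0 $ gives $ \operatorname{Re} F(is) = p(-s^2) $ and $ \operatorname{Im} F(is) = s\, q(-s^2) $, so the negative real roots of $ p $ and of $ q $ are precisely the points where $ F(is) $ is purely imaginary, respectively purely real. Each factor $ f_i(is) = \big(P_{X_i}(0) - P_{X_i}(2) s^2\big) + i\, P_{X_i}(1) s $ has argument $ \theta_i(s) $ with $ \theta_i'(s) = P_{X_i}(1)\big(P_{X_i}(0) + P_{X_i}(2) s^2\big)/|f_i(is)|^2 \geqslant 0 $, increasing from $ 0 $ to $ \pi $ as $ s $ runs over $ (0,\infty) $. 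Hence the total argument $ \sum_i \theta_i(s) $ increases from $ 0 $ to $ 3\pi $, crossing $ \tfrac{\pi}{2}, \tfrac{3\pi}{2}, \tfrac{5\pi}{2} $ (three zeros of $ \operatorname{Re} F(is) $, i.e.\ of $ p $) and $ \pi, 2\pi $ (two zeros of $ \operatorname{Im} F(is) $, i.e.\ of $ q $). This exhausts the degrees of $ p $ and $ q $, so both are real-rooted. I would run this first for all parameters strictly positive, where the monotonicity is strict, and then pass to the boundary of the parameter simplex by continuity, since the ultra-log-concavity inequalities are closed conditions.

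Two remarks on the plan. First, the decisive use of $ r = 2 $ is that each $ f_i $ is quadratic, so that $ F(is) $ has total winding exactly $ 3\pi $, matching the degrees of $ p $ and $ q $; for $ r \geqslant 3 $ the factors $ f_i(is) $ need not have monotone argument and this clean count fails, consistent with the general case remaining open. Second, an alternative to the winding argument for the odd part is to observe that $ q(y) = \sum_i P_{X_i}(1) \prod_{l \neq i}\big(P_{X_l}(0) + P_{X_l}(2) y\big) + \big(\prod_i P_{X_i}(1)\big) y $ is, up to the last nonnegative term, a nonnegative combination of quadratics whose roots interlace; evaluating at the shared roots forces a sign change and hence real roots, directly yielding the Newton inequality $ o_1^2 \geqslant 4\, o_0 o_2 $ for the coefficients $ q(y) = o_0 + o_1 y + o_2 y^2 $, in parallel with the inequality used for $ n = 2 $. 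The even part, being a cubic, genuinely requires the full winding argument (or the two Newton inequalities verified directly), and this is where I expect the real work to lie.
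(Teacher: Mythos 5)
Your proposal is correct, and while the reduction is exactly the paper's (decompose $H(S_3)$ by residue mod $2$, invoke Yu's maximum-entropy theorem for ultra-log-concave laws, optimize the resulting concave function of $w_0$), your proof of the two ultra-log-concavity claims takes a genuinely different route. The paper verifies them by brute force: it expands $P_{S_3}(k)$ into the $27$ monomials $x_{a_1a_2a_3}$ and exhibits explicit manifestly non-negative decompositions of $P_{S_3}^2(2) - 3P_{S_3}(0)P_{S_3}(4)$ and $P_{S_3}^2(3) - 4P_{S_3}(1)P_{S_3}(5)$ (the latter requiring a small case analysis on signs). You instead prove the stronger structural fact that the even and odd parts $p$, $q$ of the generating polynomial $F(x)=\prod_i f_i(x)$ are real-rooted, via the monotone-winding argument along the imaginary axis (in effect, the Hermite--Biehler criterion applied to the Hurwitz-stable factors $f_i$), and then deduce the ultra-log-concavity inequalities from Newton's inequalities; the boundary of the parameter simplex is handled by continuity since the inequalities are closed conditions. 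Your argument is sound: each $f_i(is)$ lies in the closed upper half-plane with strictly increasing argument running from $0$ to $\pi$ when all masses are positive, so $\arg F(is)$ sweeps $(0,3\pi)$ monotonically, forcing exactly three negative roots of $p$ and two of $q$, which exhausts their degrees. What your approach buys is conceptual clarity and economy --- it yields all the Newton inequalities at once (plus interlacing), and it pinpoints why $r=2$ is special (quadratic factors have monotone argument on the imaginary axis, higher-degree factors need not) --- whereas the paper's computation is entirely elementary and self-contained. One presentational caveat: make sure to state Newton's inequalities in the normalized form $\big(u_k/\binom{n}{k}\big)^2 \geqslant \big(u_{k-1}/\binom{n}{k-1}\big)\big(u_{k+1}/\binom{n}{k+1}\big)$, which is precisely the paper's definition of ultra-log-concavity of order $n$, and note that the degenerate case $\deg p<3$ (some $P_{X_i}(2)=0$) is covered by the same continuity limit.
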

\begin{proof}
Let $ P_{X_1}, P_{X_2}, P_{X_3} $ be generic distributions over $ \{0, 1, 2\} $,
and $ P_{S_3} $ their convolution.
As per the above discussion, it suffices to prove that the conditional
distribution $ P_{S_3 | S_3 \equiv 0 \mymod{2}} $, resp.\
$ P_{S_3 | S_3 \equiv 1 \mymod{2}} $, is ultra-log-concave of order $ n = 3 $,
resp.\ $ n - 1 = 2 $.
Denote for brevity $ x_{a_1 a_2 a_3} = P_{X_1}(a_1) P_{X_2}(a_2) P_{X_3}(a_3) $,
and write
\begin{equation}
\label{eq:S3}
\begin{aligned}
  P_{S_3}(0) &= x_{000}  \\
  P_{S_3}(1) &= x_{100} + x_{010} + x_{001}  \\
	P_{S_3}(2) &= x_{200} + x_{020} + x_{002} + x_{110} + x_{101} + x_{011}  \\
	P_{S_3}(3) &= x_{210} + x_{201} + x_{021} + x_{120} + x_{012} + x_{102} + x_{111}  \\
	P_{S_3}(4) &= x_{220} + x_{202} + x_{022} + x_{211} + x_{121} + x_{112}  \\
	P_{S_3}(5) &= x_{221} + x_{212} + x_{122}  \\
	P_{S_3}(6) &= x_{222}  .
\end{aligned}
\end{equation}

In order to show that $ P_{S_3 | S_3 \equiv 0 \mymod{2}} $ is ultra-log-concave
of order $ 3 $, we need to demonstrate that
\begin{align}
\label{eq:ulc}
  \bigg( \frac{ P_{S_3}(2k) }{ \binom{3}{k} } \bigg)^2  \geqslant
	\frac{ P_{S_3}(2(k-1)) }{ \binom{3}{k-1} } \cdot \frac{ P_{S_3}(2(k+1)) }{ \binom{3}{k+1} }
\end{align}
for $ k = 1, 2 $.
Consider first the case $ k = 1 $.
Using the fact that each term in $ P_{S_3}(0) P_{S_3}(4) $ is equal to a term in
$ P^2_{S_3}(2) $, e.g.,
$ x_{000} x_{220} = x_{200} x_{020} $, $ x_{000} x_{211} = x_{200} x_{011} $, etc.,
one can verify that the quantity $ P^2_{S_3}(2) - 3 P_{S_3}(0) P_{S_3}(4) $ can
be represented as follows:
\begin{alignat}{2}
\label{eq:pom}
  &P^2_{S_3}&&(2) - 3 P_{S_3}(0) P_{S_3}(4)  \\
\nonumber
	  &= && \frac{1}{2} \big( x_{200} - x_{020} - x_{011} \big)^2
          + \frac{1}{2} \big( x_{020} - x_{002} - x_{101} \big)^2
          + \frac{1}{2} \big( x_{002} - x_{200} - x_{110} \big)^2  \\
\nonumber
		&\phantom{=}  && + \frac{1}{2} \big( x^2_{110} + x^2_{101} + x^2_{011} \big) + x_{200} x_{110} + x_{020} x_{011} + x_{002} x_{101}  \\
\nonumber
    &\phantom{=}  && + 2 \big( x_{200} x_{101} + x_{020} x_{110} + x_{002} x_{011} + x_{110} x_{101} + x_{110} x_{011} + x_{101} x_{011} \big) .
\end{alignat}
This expression is clearly non-negative, implying \eqref{eq:ulc}.
The proof for $ k = 2 $ is identical.

In order to prove that $ P_{S_3 | S_3 \equiv 1 \mymod{2}} $ is ultra-log-concave
of order $ 2 $, we need to establish the inequality
\begin{align}
\label{eq:ulc2}
  P^2_{S_3}(3)  \geqslant  4 P_{S_3}(1) P_{S_3}(5) .
\end{align}
Using the fact that each term in $ P_{S_3}(1) P_{S_3}(5) $ is equal to a term
in $ P^2_{S_3}(3) $, e.g., $ x_{100} x_{221} = x_{120} x_{201} $, one can obtain
the following identity:
\begin{alignat}{2}
\label{eq:pom2}
  &P^2_{S_3}&&(3) - 4 P_{S_3}(1) P_{S_3}(5)  \\
\nonumber
	  &= && \big( x_{210} - x_{012} + x_{201} - x_{021} + x_{120} - x_{102} \big)^2
     - 4 \big( x_{201} - x_{021} \big)\big( x_{120} - x_{102} \big)  \\
\nonumber
    &\phantom{=}  && + x_{111}^2 + 2 x_{111} \big( x_{210} + x_{201} + x_{021} + x_{120} + x_{012} + x_{102} \big) .
\end{alignat}
Now, if the terms $ x_{201} - x_{021} $ and $ x_{120} - x_{102} $ have
different signs, i.e., if $ (x_{201} - x_{021})( x_{120} - x_{102}) \leqslant 0 $,
then the expression in \eqref{eq:pom2} is certainly non-negative and
\eqref{eq:ulc2} follows.
On the other hand, if these two terms are both positive (resp.\ negative),
then the term $ x_{210} - x_{012} $ must also be positive (resp.\ negative).
To see this, write $ x_{201} > x_{021} $ and $ x_{120} > x_{102} $
(resp.\ $ x_{201} < x_{021} $ and $ x_{120} < x_{102} $), multiply the
corresponding sides of these inequalities to get
$ x_{201} x_{120} > x_{021} x_{102} $ (resp.\ $ x_{201} x_{120} < x_{021} x_{102} $),
and write out explicitly both sides of the latter inequality and cancel out
some of the common factors to conclude that it is equivalent to $ x_{210} > x_{012} $
(resp.\ $ x_{210} < x_{012} $).
The three terms $ x_{201} - x_{021} $, $ x_{120} - x_{102} $, $ x_{210} - x_{012} $
having the same sign implies that
\begin{align}
\label{eq:pom3}
  \big( &(x_{210} - x_{012}) + (x_{201} - x_{021}) + (x_{120} - x_{102}) \big)^2
	 - 4 \big( x_{201} - x_{021} \big)\big( x_{120} - x_{102} \big)  \\
\nonumber
    &>  \big( (x_{201} - x_{021}) + (x_{120} - x_{102}) \big)^2
		             - 4 \big( x_{201} - x_{021} \big)\big( x_{120} - x_{102} \big)  \\
\nonumber
    &=  \big( (x_{201} - x_{021}) - (x_{120} - x_{102}) \big)^2
		\geqslant  0 ,
\end{align}
which in turn implies that the expression in \eqref{eq:pom2} is positive,
i.e., that \eqref{eq:ulc2} holds.
The proof is complete.
\end{proof}

To conclude the paper, we state one more result in this direction that
generalizes both the Shepp--Olkin theorem and the above special cases.
Let $ X_1, \ldots, X_n $ be independent random variables, as before, but
now suppose that $ X_{1}, \ldots, X_{\ell} $ are taking values in $ \{0, 1, \ldots, r\} $,
while $ X_{\ell+1}, \ldots, X_{n} $ are taking values in $ \{0, r\} $.
Denote $ S_{\ell} = X_1 + \cdots + X_{\ell} $,
$ S'_{n-\ell} = X_{\ell+1} + \cdots + X_{n} $, and $ S_n = S_{\ell} + S'_{n-\ell} $.
Since $ S'_{n-\ell} $ is a Bernoulli sum taking values in
$ \{ k r : k = 0, 1, \ldots, n - \ell \} $, the conditional distribution
$ P_{S_{n} | S_{n} \equiv j \mymod{r}} $ is a convolution of
$ P_{S_{\ell} | S_{\ell} \equiv j \mymod{r}} $ and $ P_{S'_{n-\ell}} $,
for any fixed $ j = 0, 1, \ldots, r-1 $, meaning that
\begin{align}
\label{eq:Snl}
	P_{S_{n}}(mr+j)  =  \sum_{k=0}^{m} P_{S_{\ell}}((m-k)r+j) P_{S'_{n-\ell}}(kr) .
\end{align}
This implies%
\footnote{Convolution of an ultra-log-concave distribution of order $ m $ with
a Bernoulli distribution is itself ultra-log-concave of order $ m + 1 $ \cite[Lemma 1]{yu}.}
that $ P_{S_{n} | S_{n} \equiv j \mymod{r}} $ is ultra-log-concave
(of order $ n $ for $ j = 0 $, and order $ n - 1 $ for $ j = 1, \ldots, r-1 $)
whenever $ P_{S_{\ell} | S_{\ell} \equiv j \mymod{r}} $ is ultra-log-concave
(of order $ \ell $ for $ j = 0 $, and order $ \ell - 1 $ for $ j = 1, \ldots, r-1 $).
Together with Theorems \ref{thm:main} and \ref{thm:n3r2}, this proves the
following claim.

\begin{theorem}
\begin{inparaenum}
\item[(a)]
Let $ X_1, \ldots, X_n $, $ n \geqslant 2 $, be independent random variables,
and suppose that $ X_{1}, X_{2} $ are taking values in $ \{0, 1, \ldots, r\} $,
while $ X_{3}, \ldots, X_{n} $ are taking values in $ \{0, r\} $.
Then equality holds in \eqref{eq:max}.\\
\item[(b)]
Let $ X_1, \ldots, X_n $, $ n \geqslant 3 $, be independent random variables,
and suppose that $ X_{1}, X_{2}, X_{3} $ are taking values in $ \{0, 1, 2\} $,
while $ X_{4}, \ldots, X_{n} $ are taking values in $ \{0, 2\} $.
Then equality holds in \eqref{eq:max}.
\qed
\end{inparaenum}
\end{theorem}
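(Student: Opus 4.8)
The plan is to reduce parts (a) and (b) to the two base cases already treated above, namely $ n = 2 $ (Theorem \ref{thm:main}) and $ n = 3 $, $ r = 2 $ (Theorem \ref{thm:n3r2}), by using the convolution structure \eqref{eq:Snl} to transport ultra-log-concavity from $ S_\ell $ to $ S_n $, where $ \ell = 2 $ in part (a) and $ \ell = 3 $ in part (b). Starting from the entropy decomposition \eqref{eq:max2}, I would first reduce the problem to the two conditional estimates
\[
  H(S_n \,|\, S_n \equiv 0 \mymod{r}) \leqslant H(B_n)
  \qquad \text{and} \qquad
  H(S_n \,|\, S_n \equiv j \mymod{r}) \leqslant H(B_{n-1})
\]
for $ j = 1, \ldots, r-1 $. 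Granting these, the same manipulation as in \eqref{eq:H22} --- bounding $ H(w_0, \ldots, w_{r-1}) $ by partitioning $ \{0, \ldots, r-1\} $ into $ \{0\} $ and $ \{1, \ldots, r-1\} $, and then maximizing the resulting expression over $ w_0 $ --- yields $ H(S_n) \leqslant $ (right-hand side of \eqref{eq:max}). Since the latter is a lower bound by Theorem \ref{thm:main}, equality in \eqref{eq:max} follows.

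By \cite{yu}, $ \binomial(t, 1/2) $ maximizes entropy among order-$ t $ ultra-log-concave distributions, so the two estimates hold as soon as $ P_{S_n \,|\, S_n \equiv 0 \mymod{r}} $ is ultra-log-concave of order $ n $ and each $ P_{S_n \,|\, S_n \equiv j \mymod{r}} $ ($ j \geqslant 1 $) is ultra-log-concave of order $ n-1 $. To establish this I would invoke \eqref{eq:Snl}: because $ S'_{n-\ell} $ takes values only in multiples of $ r $, the conditional law $ P_{S_n \,|\, S_n \equiv j \mymod{r}} $, viewed as a sequence, is the convolution of $ P_{S_\ell \,|\, S_\ell \equiv j \mymod{r}} $ with the law of $ S'_{n-\ell}/r = \sum_{i > \ell} X_i/r $, itself a convolution of $ n - \ell $ Bernoulli distributions. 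Applying the order-raising lemma \cite[Lemma~1]{yu} once per Bernoulli factor increases the order by $ 1 $ each time, so $ n - \ell $ convolutions lift ultra-log-concavity of order $ \ell $ (resp.\ $ \ell - 1 $) to order $ n $ (resp.\ $ n - 1 $), exactly as required.

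The base cases supply the starting ultra-log-concavity. For part (a) ($ \ell = 2 $), the proof of Theorem \ref{thm:main} shows $ P_{S_2 \,|\, S_2 \equiv 0 \mymod{r}} $ is ultra-log-concave of order $ 2 $, while $ P_{S_2 \,|\, S_2 \equiv j \mymod{r}} $ is supported on the two points $ j, r+j $ and hence trivially ultra-log-concave of order $ 1 $; the lifting then gives the claim for all $ n \geqslant 2 $. For part (b) ($ \ell = 3 $, $ r = 2 $), the proof of Theorem \ref{thm:n3r2} gives that $ P_{S_3 \,|\, S_3 \equiv 0 \mymod{2}} $ and $ P_{S_3 \,|\, S_3 \equiv 1 \mymod{2}} $ are ultra-log-concave of orders $ 3 $ and $ 2 $, respectively (inequalities \eqref{eq:ulc} and \eqref{eq:ulc2}), and the lifting gives the claim for all $ n \geqslant 3 $. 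I expect no serious remaining obstacle: the genuinely delicate work --- the sum-of-squares certificates \eqref{eq:pom} and \eqref{eq:pom2} underlying the base cases --- is already in hand, and the only point requiring care here is the bookkeeping of the order increments (checking that convolving with the rescaled binary variable $ X_i/r $ is legitimate and that the increments sum to precisely $ n - \ell $), which is routine. The whole argument is thus an assembly of \eqref{eq:max2}, \eqref{eq:Snl}, \cite{yu}, and Theorems \ref{thm:main}--\ref{thm:n3r2}.
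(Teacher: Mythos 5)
Your proof is correct and follows essentially the same route as the paper: the paper likewise splits $S_n = S_\ell + S'_{n-\ell}$, uses \eqref{eq:Snl} together with \cite[Lemma~1]{yu} to lift the ultra-log-concavity established in the proofs of Theorems \ref{thm:main} and \ref{thm:n3r2} from $S_\ell$ to $S_n$, and then concludes via the maximum-entropy property of the binomial and the decomposition \eqref{eq:max2}. No substantive differences.
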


\vspace{5mm}
\bibliographystyle{amsplain}

\vspace{1cm}

\end{document}